
\documentclass[journal,twoside]{IEEEtran}

\usepackage{cite}
\usepackage{mdwmath}
\usepackage{epsfig}
\usepackage{easymat}
\usepackage{amsmath}
\usepackage{float}
\usepackage{cite}
\usepackage{graphicx}
\usepackage{balance}




\newtheorem{theorem}{Theorem}
\newtheorem{corollary}{Corollary}
\newtheorem{Lemma}{Lemma}




\hyphenation{op-tical net-works semi-conduc-tor}

\begin{document}

\title{Impact of Spatial Correlation on the Finite-SNR Diversity-Multiplexing Tradeoff}

\author{Z. Rezki,~\IEEEmembership{Student Member,~IEEE,}
        David Haccoun,~\IEEEmembership{Life-Fellow,~IEEE,}
        Fran\c{c}ois Gagnon,~\IEEEmembership{Senior Member,~IEEE,}
        and Wessam Ajib,~\IEEEmembership{Member,~IEEE,}%
\thanks{Z. Rezki and D. Haccoun are with the Department
of Electrical Engineering, \'Ecole Polytechnique de Montr\'eal,
Email: \{zouheir.rezki,david.haccoun\}@polymtl.ca,}
\thanks{F. Gagnon is with the Department of Electrical Engineering, \'Ecole de technologie
sup\'erieure, Email: francois.gagnon@etsmtl.ca}
\thanks{W. Ajib is with the Department of Computer Sciences, Universit\'e du Qu\'ebec \`a
Montr\'eal Email: ajib.wessam@uqam.ca}
\thanks{Manuscript received November 06, 2006; revised April 20, 2007; accepted June 04, 2007.
 The editor coordinating the review of this paper and approving it for publication was Jitendra
 Tugnait.}
\thanks{Digital Object Identifier 10.1109/TWC.2008.060984}}

\markboth{IEEE TRANSACTION ON WIRELESS COMMUNICATIONS,~Vol.~X, No.~X, XXX~2007}%
{Rezki \MakeLowercase{\textit{et al.}}: Impact of Spatial
Correlation on the Finite-SNR Diversity-Multiplexing Tradeoff}
%



\maketitle

\begin{abstract}
The impact of spatial correlation on the performance limits of
multielement antenna (MEA) channels is analyzed in terms of the
diversity-multiplexing tradeoff (DMT) at finite signal-to-noise
ratio (SNR) values. A lower bound on the outage probability is
first derived. Using this bound accurate finite-SNR estimate of
the DMT is then derived. This estimate allows to gain insight on
the impact of spatial correlation on the DMT at finite SNR. As
expected, the DMT is severely degraded as the spatial correlation
increases. Moreover, using asymptotic analysis, we show that our
framework encompasses well-known results concerning the asymptotic
behavior of the DMT.
\end{abstract}

\begin{keywords}
Diversity-Multiplexing tradeoff (DMT), finite
SNR, outage probability,
spatial correlation.
\end{keywords}

\section{Introduction}\label{introduction}
Multielement antenna (MEA) systems have been used either to
increase the diversity gain in order to better combat channel
fading \cite{Tarokh98}, or to increase the data rate by means of
spatial multiplexing gain \cite{Foschini96}. Recently, Zheng and
Tse showed that both gains can be achieved with an asymptotic
optimal tradeoff at high-SNR regime \cite{Zheng03}. This tradeoff
is a characterization of the maximum diversity gain that can be
achieved at each multiplexing gain. At low to moderate SNR values
(typically $3-20$ dB), this asymptotic tradeoff is in fact an
optimistic upper bound on the finite-SNR diversity-multiplexing
tradeoff.  A framework was introduced by Narasimhan to
characterize the diversity performance of rate-adaptive MIMO
systems at finite SNR \cite{Nara}. As expected, the achievable
diversity gains at realistic SNR values are significantly lower
than for the asymptotic values. In \cite{Zheng03} and \cite{Nara},
the authors assume independent and identically distributed
(i.i.d.) fading channels. However, in real propagation
environments, the fades can be correlated which may be detrimental
to the performances of MEA systems \cite{Da}. Recently, the impact
of spatial correlation on the finite-SNR DMT was studied in
\cite{Nara_IT}.\

In this paper, we analyze the impact of spatial correlation on the
DMT at finite-SNR values using a new framework and we extend to
the high-SNR regimes, the results obtained for finite-SNR in
\cite{Rezki-vtc06}. We prove that our framework encompasses the
well-known asymptotic DMT for correlated and uncorrelated channels
\cite{Zheng03,Chang}. Thus, our framework may be seen as a
generalization of the asymptotic analysis.\

When capacity-achieving codes are used over a quasi-static
channel, the errors are mainly caused by atypical deep fades of
the channel, that is, the block-error probability is equal to the
outage probability of the channel $P_{out}$ which will be formally
defined later in the paper. Therefore, $P_{out}$ is the key
parameter for deriving the performance limits of MEA systems.
Since derivation of an exact expression for $P_{out}$ is
difficult, we alternatively derive lower bounds on $P_{out}$ over
both spatially correlated and uncorrelated channels. These bounds
are then used to obtain insightful estimates of the related
finite-SNR DMT. These estimates allow to characterize the
potential limits of MEA systems, in terms of DMT, in a more
realistic propagation environment and for practical SNR values.
The paper is organized as follows. Section \ref{model} presents
the system model and introduces the related definitions. In
section \ref{outage}, we derive lower bounds on $P_{out}$.
Finite-SNR DMT estimates are given in section \ref{diversity}. An
asymptotic analysis of the diversity estimates is investigated in
section \ref{sec-asymptotic}. Numerical results are reported in
Section \ref{result} and Section \ref{conclusion} concludes the
paper.

\section{Channel Model and Related definitions }\label{model}
Let an MEA system consist of $N_t$ transmit antennas and $N_r$
receive antennas. We restrict our analysis to a Rayleigh
flat-fading channel, where the entries of the $N_r \times N_t$
channel matrix ${\boldsymbol H}$ are circularly-symmetric zero
mean complex Gaussian distributed and possibly correlated. The
channel is assumed to be quasi-static, unknown at the transmitter
and completely tracked at the receiver. For the effect of spatial
fading correlation, we model ${\boldsymbol H}$ as:
$\boldsymbol{H}=\boldsymbol{R_{r}^{1/2}}\boldsymbol{H_w}\boldsymbol{R_{t}^{1/2}}$,
where matrix $\boldsymbol H_w$ represents the $N_r \times N_t$
spatially uncorrelated channel, and where matrices $\boldsymbol
R_{r}$, of dimension $N_{r} \times N_{r}$ and $\boldsymbol R_{t}$,
of dimension $N_{t} \times N_{t}$ are positive-definite Hermitian
matrices that specify the receive and transmit correlations
respectively \cite{Paulraj}.

For an SNR-dependent spectral efficiency $R(SNR)$ in bps/Hz,
$P_{out}$ is defined as \cite{Telatar}:
\begin{equation}\label{Outage}
    P_{out}=Prob(I<R)
\end{equation}
where $I$ is the channel mutual information. Assuming equal power
allocation over the transmit antennas, $I$ is given by:
\begin{equation}\label{I}
    I=\log_2{\left(\det{\left(\boldsymbol{I_{N_r}}+\frac{\eta}{N_t}{\boldsymbol
H}{\boldsymbol H}^H \right)}\right)} \quad bps/Hz
\end{equation}
where $\boldsymbol{I_{N_r}}$ is the $N_r \times N_r$ identity
matrix and where the superscript $^H$ indicates for conjugate
transposition. The multiplexing and diversity gains are
respectively defined as \cite{Nara}:
\begin{IEEEeqnarray}{rCl}\label{r}
     {r} & = & \frac{R}{\log_{2}(1+g \cdot \eta)}\\
     \label{d}
     d(r,\eta) & = & -\eta {\frac{\partial{\ln P_{out}(r,\eta)}}{\partial{
     \eta}}}
\end{IEEEeqnarray}
where $g$ is the array gain which is equal to
$N_r$, and where $\eta$ is the mean SNR value at each receive
antenna. As defined in (\ref{r}) and (\ref{d}), the multiplexing
gain $r$ provides an indication of a rate adaptation strategy as
the SNR changes, while the diversity gain $d(r,\eta)$ can be used
to estimate the additional SNR required to decrease $P_{out}$ by a
specific amount for a given $r$.

\section{Lower Bound on The Outage Probability}\label{outage}
Let us denote the orthogonal-triangular (QR) decomposition of
$\boldsymbol H_w=\boldsymbol Q \boldsymbol R$, where $\boldsymbol
Q$ is an $N_{r} \times N_{r}$ unitary matrix and where
$\boldsymbol R$ is an $N_{r} \times N_{t}$ upper triangular matrix
with independent entries. The square magnitudes of the diagonal
entries of $\boldsymbol R$, $|R_{l,l}|^2$, are chi-square
distributed with $2(Nr-l+1)$ degrees of freedom, $l=1,\ldots
,\min{(N_{t},N_{r})}$. The off-diagonal elements of $\boldsymbol
R$ are i.i.d. Gaussian variables, with zero mean and unit
variance. Let the singular value decompositions (SVDs) of
$\boldsymbol R_{t}^{1/2}=\boldsymbol U \boldsymbol D \boldsymbol
V^H$, and $\boldsymbol R_{r}^{1/2}=\boldsymbol U' \boldsymbol D'
\boldsymbol V'^H$, where $\boldsymbol U$ and $\boldsymbol V$ (or
$\boldsymbol U'$, $\boldsymbol V'$) are $N_{t} \times N_{t}$ (or
$N_{r} \times N_{r})$ that satisfy $\boldsymbol U ^H \boldsymbol U
=
 \boldsymbol V^H \boldsymbol V = \boldsymbol I_{N_t}$ (or $\boldsymbol U'^H \boldsymbol U'=
\boldsymbol V'^H \boldsymbol V' = \boldsymbol I_{N_r}$), and where
$\boldsymbol D$ (or $\boldsymbol D'$) is an $N_{t} \times N_{t}$
(or $N_{r} \times N_{r}$) diagonal matrix, whose diagonal elements
are the singular values of $\boldsymbol R_{t}^{1/2}$ (or
$\boldsymbol R_{r}^{1/2}$). We assume, without loss of generality,
that the elements of $\boldsymbol D$ and $\boldsymbol D'$ are
ordered in descending order of their magnitudes along the
diagonal. Using these SVDs, since
$\det{\left(\boldsymbol{I}+\boldsymbol{X}\boldsymbol{Y}\right)}=\det{\left(\boldsymbol{I}+\boldsymbol{Y}\boldsymbol{X}\right)}$
and since unitary transformations do not change the statistics of
random matrices, we have:
\begin{IEEEeqnarray}{rCl}
\nonumber
  I & = & \log_2{\left(\det{(\boldsymbol{I_{N_r}}+\frac{\eta}{N_t}{\boldsymbol
R_{r}^{1/2}} {\boldsymbol H_{w}}{\boldsymbol R_{t}}{\boldsymbol
H_{w}^{H}}{\boldsymbol R_{r}^{H/2}})}\right)}\\
    & \stackrel{d}{=} & \log_2{\left(\det{(\boldsymbol{I_{N_r}}+\frac{\eta}{N_t}{\boldsymbol
R} {\boldsymbol D^2} {\boldsymbol R^H} {\boldsymbol D^{'2}})
}\right)}  \label{UB-mutu-inf}
\end{IEEEeqnarray}

where the symbol $\stackrel{d}{=}$ means equality in
distributions. From (\ref{UB-mutu-inf}), it is clear that
$\boldsymbol R_{t}$ and $\boldsymbol R_{r}$ contribute to the
channel mutual information through their diagonal matrix
representatives ${\boldsymbol D^2}$ and ${\boldsymbol D^{'2}}$
respectively. Since ${\boldsymbol D^2}$ and ${\boldsymbol D^{'2}}$
have a similar role in (\ref{UB-mutu-inf}), without loss of
generality, we can focus on the spatial correlation at the
transmitter, that is, we assume $\boldsymbol{
R_{r}^{1/2}}={\boldsymbol D^{'2}}=\boldsymbol{I_{N_r}}$.\

Let $D_{k}$, $k=1,\ldots,N_{t}$, denote the $k^{th}$ diagonal
element of $\boldsymbol{D}$, and let $R_{l,k}$ represent the
element of $\boldsymbol R$ at the $l^{th}$ row and the $k^{th}$
column, $l=1,\ldots,N_{r}$, $k=1,\ldots,N_{t}$. Using the fact
that $\det(\boldsymbol{A})\leq \prod_{l}A_{l,l}$, for any
nonnegative-definite matrix $\boldsymbol{A}$, we obtain from
(\ref{UB-mutu-inf}):
\begin{equation}\label{I_upper}
   I \leq \sum_{l=1}^{t}\log_2{\left(1+\frac{\eta}{N_t} \Delta_l\right)},
\end{equation}
where $t=\min(N_t,N_r)$ and where
$\Delta_l=\sum_{k=l}^{N_t}D_{k}^2|{R_{l,k}}^2|$, $l=1,\ldots,t$,
is the $l^{th}$ diagonal entry of $\boldsymbol{R} \boldsymbol
{D^2} \boldsymbol {R^H}$. Since $R_{l,k}$ are independent, then
$\Delta_l$ are also independent. In order to derive a lower bound
on $P_{out}$, the distribution function of $\Delta_l$ is needed.
When all $D_{k}^2, k=1,..,N_t$ are equal which corresponds to the
uncorrelated case, the trace constraint $trace
(\boldsymbol{R_{t}^{1/2}})=N_t$ imposes $D_{k}^2=1$,
$k=1,\dots,N_t$. That is, $\Delta_l$ is chi-square distributed
with $2(N_{r}+N_{t}-2l+1)$ degrees of freedom. Otherwise,
$\Delta_l$ may be viewed as a generalized quadratic form of a
Gaussian random vector. We first derive the distribution function
of $\Delta_{l}$, $l=1,\dots,t$, in Lemma \ref{L1}.\\

\begin{Lemma}
\label{L1}
 Assuming that all $D_{k}^2$'s, $k=1,\dots,N_t$, are
distinct, the distribution function of $\Delta_{l}$, $l=1,\dots,t$,
is given by:
\begin{equation}\label{dist-delta}
      f_{\Delta_{l}}(x)=\sum_{k=1}^{N_{r}-l+1}
    a_{k}^{(l)}f_{G(k,D_{l}^2)}(x)+\sum_{k=1}^{N_{t}-l}a_{1}^{(l+k)}f_{G(1,D_{l+k}^2)}(x),
\end{equation}
where $G(\alpha,\beta)$ is a Gamma random variable with
probability distribution function given by:
$f_{G(\alpha,\beta)}(x)=\frac{x^{\alpha -1}}{\Gamma(\alpha)
\beta^{\alpha}} e^{-\frac{x}{\beta}}$, $x\geq 0$, $\alpha > 0$,
$\beta > 0$. The coefficients $a_{k}^{(l)}$ and $a_{1}^{(l+k)}$
are given by:
\begin{IEEEeqnarray*}{rcl}
   a_{k}^{(l)} & = &  \frac{(-D_{l}^2)^{-(N_{r}-l+1-k)}}{(N_{r}-l+1-k)!} \cdot\\
   & \quad &  \frac{d^{(N_{r}-l+1-k)}}{d(j
v)^{(N_{r}-l+1-k)}}\left[\left(1-jvD_{l}^2\right)^{N_{r}-l+1}\Psi_{\Delta_{l}}(jv)\right]_{jv=D_{l}^{-2}},\\
a_{1}^{(l+k)} & = &
  \left[\left(1-jvD_{l+k}^2\right)\Psi_{\Delta_{l}}(jv)\right]_{jv=D_{l+k}^{-2}},
\end{IEEEeqnarray*}
where $\frac{d^{(k)f(x)}}{dx^{(k)}}$ is the $k^{th}$ derivative of
$f(x)$ and where $\Psi_{\Delta_{l}}(jv)$ is the moment generating
function of $\Delta_{l}$ given by:
\begin{equation*}\label{}
    \Psi_{\Delta_{l}}(jv)=\left(1-jvD_{l}^{2}\right)^{-(N_{r}-l+1)}\prod_{k=1}^{N_{t}-l}
\left(1-jvD_{l+k}^{2}\right)^{-1}.
\end{equation*}
\end{Lemma}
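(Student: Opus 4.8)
The plan is to recognise $\Delta_l=\sum_{k=l}^{N_t}D_k^2\,|R_{l,k}|^2$ as a weighted sum of independent Gamma variates and to recover its density by inverting its moment generating function through a partial-fraction expansion. First I would record the marginal laws of the summands, in the normalisation for which a chi-square with $2m$ degrees of freedom is $G(m,1)$ (equivalently has characteristic function $(1-jv)^{-m}$). Then, since $|R_{l,l}|^2$ is chi-square with $2(N_r-l+1)$ degrees of freedom, the scaled term $D_l^2|R_{l,l}|^2$ is $G(N_r-l+1,D_l^2)$; each off-diagonal $|R_{l,k}|^2$, $k>l$, is exponential (chi-square with two degrees of freedom), so $D_k^2|R_{l,k}|^2$ is $G(1,D_k^2)$. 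Using the independence of the $R_{l,k}$ and the fact that the moment generating function of $G(\alpha,\beta)$ is $(1-jv\beta)^{-\alpha}$, the function $\Psi_{\Delta_l}(jv)$ factors into exactly the product displayed in the statement.

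Next I would treat $\Psi_{\Delta_l}(jv)$ as a rational function of the variable $jv$. Under the hypothesis that the $D_k^2$ are distinct, its only singularities are a pole at $jv=D_l^{-2}$ of multiplicity $N_r-l+1$ and simple poles at $jv=D_{l+k}^{-2}$ for $k=1,\ldots,N_t-l$. A partial-fraction expansion then writes $\Psi_{\Delta_l}(jv)$ as a linear combination of the elementary factors $(1-jvD_l^2)^{-k}$, $k=1,\ldots,N_r-l+1$, together with the factors $(1-jvD_{l+k}^2)^{-1}$. The coefficient attached to each simple pole is the ordinary residue $a_1^{(l+k)}=[(1-jvD_{l+k}^2)\,\Psi_{\Delta_l}(jv)]_{jv=D_{l+k}^{-2}}$, while those attached to the multiple pole are produced by the standard higher-order residue rule, namely by differentiating $(1-jvD_l^2)^{N_r-l+1}\Psi_{\Delta_l}(jv)$ the requisite number of times and evaluating at $jv=D_l^{-2}$, which reproduces the stated $a_k^{(l)}$.

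Finally, because $(1-jv\beta)^{-\alpha}$ is precisely the moment generating function of $G(\alpha,\beta)$, I would invert the expansion term by term. By linearity of the inverse transform, each factor $(1-jvD_l^2)^{-k}$ maps back to $f_{G(k,D_l^2)}$ and each $(1-jvD_{l+k}^2)^{-1}$ to $f_{G(1,D_{l+k}^2)}$, giving the claimed mixture representation of $f_{\Delta_l}(x)$.

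The main obstacle is the bookkeeping for the multiple pole at $D_l^{-2}$: carrying the $(N_r-l+1-k)$-th derivative through the product form of $\Psi_{\Delta_l}$ while correctly tracking the prefactor $(-D_l^2)^{-(N_r-l+1-k)}$ and the factorial $1/(N_r-l+1-k)!$, so that the general residue formula lines up with the displayed $a_k^{(l)}$. The distinctness assumption on the $D_k^2$ is exactly what prevents the poles from coalescing and keeps every off-diagonal factor simple; without it, higher multiplicities would merge and the form of the coefficients would have to be recomputed.
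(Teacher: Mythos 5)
Your proposal is correct: the MGF factorization, the partial-fraction expansion in $jv$ (one pole of order $N_r-l+1$ at $D_l^{-2}$, simple poles at the $D_{l+k}^{-2}$), and term-by-term inversion back to Gamma densities reproduce exactly the stated coefficients, including the factor $(-D_l^2)^{-(N_r-l+1-k)}/(N_r-l+1-k)!$ from the higher-order pole. The paper itself defers this proof to \cite{Rezki-vtc06}, but the very form of the coefficients $a_k^{(l)}$ and $a_1^{(l+k)}$ --- residue-type evaluations of the de-poled $\Psi_{\Delta_l}$ --- shows that this partial-fraction inversion of the moment generating function is precisely the intended argument, so your route is essentially the same as the paper's.
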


\begin{proof}
The proof can be found in \cite{Rezki-vtc06}.
\end{proof}\

Note that when all $D_{k}^2, k=1,\dots,N_t$, are not distinct, the
distribution of $\Delta_{l}, l=1,..,t$, can be derived using the
same mechanism. Using (\ref{Outage}), (\ref{I_upper}) and Lemma
\ref{L1}, a lower bound on $P_{out}$ may be expressed in the
following theorem.\\

\begin{theorem}[Lower Bound]\label{th-LB}
\label{T1} Lower bounds on the outage probability $P_{out}$ for
the uncorrelated $D_{k}^2=1$, $k=1,\dots,N_t$, and correlated
spatial fading channels are respectively given by:
\begin{IEEEeqnarray}{rCl}
\label{eq-low-bound}
  P_{out}^{uncorr} & \geq &  \prod_{l=1}^{t}\Gamma_{inc}(\xi_l,N_{r}+N_{t}-2l+1), \\
\label{eq-low-bound2}
  P_{out}^{corr}   & \geq & \prod_{l=1}^{t} \Bigl( \sum_{k=1}^{N_{r}-l+1}a_{k}^{(l)}\Gamma_{inc}(\frac{\xi_l}{D_{l}^{2}},k)\\
   &\quad & +\sum_{k=1}^{N_{t}-l}a_{1}^{(l+k)}\Gamma_{inc}(\frac{\xi_l}{D_{l+k}^{2}},1)\Bigr),
   \nonumber
\end{IEEEeqnarray}
where $b_{l}$, $l=1,\dots,t$, are arbitrary positive coefficients
that satisfy $r=\sum_{l=1}^{t}b_l$, $\Gamma_{inc}$ is the
incomplete Gamma function defined by
$\Gamma_{inc}(x,a)=\frac{1}{(a-1)!}\int_{0}^{x}t^{a-1}e^{-t}dt$
and where $\xi_l$ is given by:
$\xi_l=\frac{N_t}{\eta}\left(\left(1+g\eta\right)^{b_l}-1\right)$.
\end{theorem}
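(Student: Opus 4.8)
The plan is to combine the deterministic upper bound on the mutual information in (\ref{I_upper}) with the independence of the $\Delta_l$'s and the marginal distributions furnished by Lemma \ref{L1}. The starting observation is that (\ref{I_upper}) gives $I \le \sum_{l=1}^{t}\log_2(1+\frac{\eta}{N_t}\Delta_l)=:U$, so the event $\{U<R\}$ is contained in $\{I<R\}$, and therefore $P_{out}=Prob(I<R)\ge Prob(U<R)$. This is the mechanism that makes an \emph{upper} bound on $I$ produce a \emph{lower} bound on $P_{out}$, exactly as the theorem claims.

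The key device is a splitting of the rate. Writing $R=r\log_2(1+g\eta)$ from (\ref{r}) and choosing any positive coefficients $b_l$ with $\sum_{l=1}^{t}b_l=r$, I would note that if every per-stream term satisfies $\log_2(1+\frac{\eta}{N_t}\Delta_l)<b_l\log_2(1+g\eta)$, then summing over $l$ forces $U<R$. Hence
\begin{equation*}
\bigcap_{l=1}^{t}\Bigl\{\log_2\bigl(1+\tfrac{\eta}{N_t}\Delta_l\bigr)<b_l\log_2(1+g\eta)\Bigr\}\subseteq\{U<R\},
\end{equation*}
so $Prob(U<R)$ is bounded below by the probability of this intersection. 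Because the $R_{l,k}$, and hence the $\Delta_l$, are independent, the intersection probability factorizes as $\prod_{l=1}^{t}Prob(\Delta_l<\xi_l)$, where the threshold $\xi_l$ is obtained by inverting $1+\frac{\eta}{N_t}\Delta_l<(1+g\eta)^{b_l}$, which gives precisely $\xi_l=\frac{N_t}{\eta}\bigl((1+g\eta)^{b_l}-1\bigr)$.

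It then remains to evaluate each marginal CDF $F_{\Delta_l}(\xi_l)=Prob(\Delta_l<\xi_l)$. In the uncorrelated case $D_k^2=1$, the random variable $\Delta_l$ is chi-square with $2(N_r+N_t-2l+1)$ degrees of freedom, i.e. a unit-scale Gamma of shape $N_r+N_t-2l+1$, whose CDF is exactly $\Gamma_{inc}(\xi_l,N_r+N_t-2l+1)$; this yields (\ref{eq-low-bound}). In the correlated case I would integrate the mixture density (\ref{dist-delta}) of Lemma \ref{L1} term by term from $0$ to $\xi_l$; using that the CDF of $G(\alpha,\beta)$ at $x$ equals $\Gamma_{inc}(x/\beta,\alpha)$, each Gamma component contributes $\Gamma_{inc}(\xi_l/D_l^2,k)$ or $\Gamma_{inc}(\xi_l/D_{l+k}^2,1)$, reproducing the bracketed factor in (\ref{eq-low-bound2}). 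Taking the product over $l$ completes both bounds.

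I expect the only delicate step to be the rate-splitting inequality: relaxing the single constraint $\sum_l\log_2(\cdots)<R$ into the intersection of $t$ per-stream constraints is what lets the probability factor through independence, at the cost of introducing the free parameters $b_l$ (which can subsequently be optimized). Everything after that—inverting the threshold to obtain $\xi_l$ and integrating the Gamma mixture—is routine, since the component CDFs are simply incomplete Gamma functions.
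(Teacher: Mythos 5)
Your proposal is correct and follows exactly the route the paper intends: the paper defers its proof to \cite{Rezki-vtc06} (along the lines of \cite[Theorem 1]{Nara_IT}), but all of its scaffolding in Section \ref{outage} --- the bound (\ref{I_upper}), the independence of the $\Delta_l$, Lemma \ref{L1}, and the rate-splitting coefficients $b_l$ with $\sum_l b_l = r$ that define $\xi_l$ --- is precisely the argument you reconstruct. Your chain $P_{out}\geq Prob(U<R)\geq \prod_{l=1}^{t}Prob(\Delta_l<\xi_l)$, followed by evaluating the marginal CDFs as incomplete Gamma functions, is the intended proof.
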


\begin{proof}
The proof has been given in \cite{Rezki-vtc06} but follows along
similar lines as \cite[Theorem 1]{Nara_IT}.
\end{proof}\

In order to obtain tighter results, the lower bounds given in
Theorem \ref{T1} are maximized over the set of coefficients
$b_{l}$, $l=1,\dots,t$, for each multiplexing gain $r$ and each
SNR value $\eta$. Clearly, the computational time of this
optimization problem is much smaller than that required by Monte
Carlo simulations for computing the exact $P_{out}$. It is worth
noting that since $\xi_{l} \geq \alpha_{\rho,l}$, where
$\alpha_{\rho,l}$ was defined in \cite[Theorem 1]{Nara_IT}, and
since $\Gamma_{inc}(x,a)$ is an increasing function in $x$, the
lower bounds given by (\ref{eq-low-bound}) is tighter than that
given by \cite[Theorem 1]{Nara_IT} for the uncorrelated case.
Moreover, (\ref{eq-low-bound2}) appears as a finite product of a
weighted sum of $\Gamma_{inc}$ functions, which is more insightful
and easier to compute than the lower bound derived in
\cite[Theorem 1]{Nara_IT}, which involves weighted infinite series
of $\Gamma_{inc}$ functions.

\section{Finite-SNR Diversity and Correlation}\label{diversity}
Using Theorem \ref{th-LB} and (\ref{d}), an estimate of the
finite-SNR diversity for a given multiplexing gain is now derived
in the following corollary.\\

\begin{corollary}[Diversity estimate] \label{C1}
An estimate of the diversity for the correlated and uncorrelated
spatial fadings are respectively given by:
\begin{IEEEeqnarray}{rcl}
 \label{d-est-uncorr}
 \nonumber
 \hat{d}^{uncorr}(r,\eta)& = & \frac{N_t}{\eta} \sum _{l=1}^{t}\left((1+g\eta)^{b_l}-b_l g \eta (1+g\eta)^{b_l
 -1}-1\right)\\
  & \quad & \frac{\xi_{l}^{N_t + N_r -2l}e^{-\xi_l}/(N_t+N_r
 -2l)!}{{\Gamma_{inc}(\xi_l,N_r + N_t -2l+1)}}\\
 \label{d-est-corr}
 \nonumber
\hat{d}^{corr}(r,\eta)& = & \frac{N_t}{\eta} \sum_{l=1}^{t}
\left((1+g\eta)^{b_l}-b_l g \eta (1+g\eta)^{b_l
 -1}-1\right)\\
   & \quad & \frac{Q_{l}(\xi_{l})}{P_{l}(\xi_{l})},
 \end{IEEEeqnarray}
where $Q_{l}(\xi_{l})$ and $P_{l}(\xi_{l})$ are given by:
\begin{IEEEeqnarray}{rCl}
\nonumber
\label{Ql}
  Q_{l}(\xi_{l}) & = & \sum_{k=1}^{N_{r}-l+1}\frac{a_{k}^{(l)}}{(k-1)!}\left(\frac{\xi_l}{D_{l}^{2}}\right)^{k-1}e^{{-\frac{\xi_l}{D_{l}^{2}}}}D_{l}^{-2}\\
\nonumber
  & \quad & +\sum_{k=1}^{N_{t}-l}a_{1}^{(l+k)}e^{-\frac{\xi_l}{D_{l+k}^{2}}}D_{l+k}^{-2} \\
\nonumber
 \label{Pl}
  P_{l}(\xi_{l}) & = &
  \sum_{k=1}^{N_{r}-l+1}a_{k}^{(l)}\Gamma_{inc}\left(\frac{\xi_l}{D_{l}^{2}},k\right)\\
  \nonumber
  & \quad & +\sum_{k=1}^{N_{t}-l}a_{1}^{(l+k)}\Gamma_{inc}\left(\frac{\xi_l}{D_{l+k}^{2}},1\right).
\end{IEEEeqnarray}
\end{corollary}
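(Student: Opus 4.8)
The plan is to treat the lower bounds of Theorem~\ref{T1} as estimates of $P_{out}$ and to substitute them directly into the defining relation~(\ref{d}) for the diversity gain. Writing $\hat{d}(r,\eta)=-\eta\,\partial\ln\hat{P}_{out}/\partial\eta$ with $\hat{P}_{out}$ taken to be the right-hand side of~(\ref{eq-low-bound}) or~(\ref{eq-low-bound2}), the proof reduces to a differentiation exercise. Since both bounds are \emph{finite products} over $l=1,\dots,t$, the logarithm converts them into sums, so that $\ln\hat{P}_{out}=\sum_{l=1}^{t}\ln F_l(\xi_l)$, where $F_l$ denotes the $l$-th factor. Consequently the log-derivative decouples into $t$ independent contributions, and it suffices to differentiate a single factor and then sum over $l$.

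First I would isolate the dependence on $\eta$. Each factor depends on $\eta$ only through $\xi_l=\frac{N_t}{\eta}\left((1+g\eta)^{b_l}-1\right)$, so the chain rule gives $\frac{\partial}{\partial\eta}\ln F_l(\xi_l)=\frac{F_l'(\xi_l)}{F_l(\xi_l)}\frac{d\xi_l}{d\eta}$. A direct computation of $\frac{d\xi_l}{d\eta}$ by the quotient rule yields $-\eta\,\frac{d\xi_l}{d\eta}=\frac{N_t}{\eta}\left((1+g\eta)^{b_l}-b_l g\eta(1+g\eta)^{b_l-1}-1\right)$, which is precisely the common prefactor appearing in both claimed expressions for $\hat{d}^{uncorr}$ and $\hat{d}^{corr}$. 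The remaining ingredient is $\frac{\partial}{\partial x}\Gamma_{inc}(x,a)=\frac{x^{a-1}e^{-x}}{(a-1)!}$, obtained immediately from the integral definition of $\Gamma_{inc}$ by the fundamental theorem of calculus.

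For the uncorrelated case the single factor is $F_l(\xi_l)=\Gamma_{inc}(\xi_l,N_r+N_t-2l+1)$, so that $F_l'(\xi_l)=\xi_l^{N_r+N_t-2l}e^{-\xi_l}/(N_r+N_t-2l)!$; combining this with the prefactor and summing over $l$ reproduces the stated formula at once. For the correlated case $F_l(\xi_l)=P_l(\xi_l)$ is the weighted sum of incomplete Gamma functions, and I would differentiate it term by term, applying the chain rule to each scaled argument $\xi_l/D_l^2$ and $\xi_l/D_{l+k}^2$. This introduces the factors $D_l^{-2}$ and $D_{l+k}^{-2}$ and, for the $\Gamma_{inc}(\cdot,k)$ and $\Gamma_{inc}(\cdot,1)$ terms respectively, the densities $\left(\xi_l/D_l^2\right)^{k-1}e^{-\xi_l/D_l^2}/(k-1)!$ and $e^{-\xi_l/D_{l+k}^2}$, so that $P_l'(\xi_l)=Q_l(\xi_l)$ exactly as stated. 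The hard part will not be conceptual but bookkeeping: in the correlated case one must keep the two families of terms straight and correctly propagate the $D^{-2}$ factors through the chain rule, whereas everything else follows mechanically from the product-to-sum decomposition and the two elementary derivatives above.
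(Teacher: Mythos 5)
Your proposal is correct and is exactly the paper's (implicit) derivation: the corollary is obtained by substituting the lower bounds of Theorem~\ref{th-LB} into the definition~(\ref{d}), using the log of the finite product to decouple the $t$ factors, the chain rule through $\xi_l$ (whose derivative produces the common prefactor $\frac{N_t}{\eta}\left((1+g\eta)^{b_l}-b_l g \eta (1+g\eta)^{b_l-1}-1\right)$), and the elementary derivative of $\Gamma_{inc}$, which yields $P_l'=Q_l$ in the correlated case. No gap; nothing further is needed.
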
\

Note that (\ref{d-est-uncorr}) and (\ref{d-est-corr}) have similar
closed forms. Clearly, (\ref{d-est-uncorr}) can be obtained from
(\ref{d-est-corr}) by replacing $P_{l}(\xi_{l})$ and
$Q_{l}(\xi_{l})$ by ${\Gamma_{inc}(\xi_l,N_r + N_t -2l+1)}$ and
$\left(\xi_{l}^{N_t + N_r -2l}e^{-\xi_l}\right)/(N_t+N_r -2l)!$
respectively. It should be pointed out that (\ref{d-est-uncorr})
and (\ref{d-est-corr}) are simpler and more insightful than the
diversity estimates given in \cite[Theorem 3]{Nara_IT}, which
again involve infinite series.

\section{Asymptotic behavior of the diversity estimates}\label{sec-asymptotic}
In order to examine whether the diversity estimates given in
Corollary \ref{C1} match the well-known asymptotic DMT at high-SNR
given in \cite{Zheng03}, we analyze the asymptotic behavior of the
diversity estimates we derived, as $\eta \rightarrow \infty$ or as
$r \rightarrow 0$. First, we present the following lemma.\\

\begin{Lemma}
\label{lemmma-asymptoic-behavior} Assuming full-rank transmit
spatial correlation, we can write:
\begin{equation}\label{equal-d}
    \lim_{\eta \rightarrow \infty}\hat{d}^{uncorr}(r,\eta)=\lim_{\eta \rightarrow
    \infty}\hat{d}^{corr}(r,\eta).
\end{equation}
\end{Lemma}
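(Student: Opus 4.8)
The plan is to compare the two estimates of Corollary \ref{C1} term by term and show that, as $\eta\to\infty$, the $l$-th term of $\hat{d}^{corr}$ converges to the $l$-th term of $\hat{d}^{uncorr}$. The key starting observation is that the quantities in Corollary \ref{C1} have a transparent probabilistic meaning. Comparing with Lemma \ref{L1} and the definition of $\Gamma_{inc}$, one sees that $P_l(\xi_l)$ is exactly the cumulative distribution function $F_{\Delta_l}(\xi_l)$ of $\Delta_l$, while $Q_l(\xi_l)$ is the associated density $f_{\Delta_l}(\xi_l)$. In the uncorrelated case the analogous ratio, with numerator $\xi_l^{N_t+N_r-2l}e^{-\xi_l}/(N_t+N_r-2l)!$ and denominator $\Gamma_{inc}(\xi_l,N_r+N_t-2l+1)$, is precisely $f_{G(m_l,1)}(\xi_l)/F_{G(m_l,1)}(\xi_l)$ with $m_l:=N_r+N_t-2l+1$. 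Thus both estimates are built from ratios of the form (density)/(CDF), and the whole problem reduces to a small-argument comparison of these ratios.

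First I would record the behaviour of $\xi_l$ and of the prefactor. Since $\xi_l=\frac{N_t}{\eta}\left((1+g\eta)^{b_l}-1\right)$, for $0\le b_l<1$ one has $\xi_l\to 0$ as $\eta\to\infty$, while a short computation gives $A_l:=\frac{N_t}{\eta}\left((1+g\eta)^{b_l}-b_lg\eta(1+g\eta)^{b_l-1}-1\right)\sim(1-b_l)\,\xi_l$. For $b_l=1$ a direct substitution yields $A_l\equiv 0$, so those terms drop out of both sums. It therefore suffices to treat the indices with $b_l<1$, for which $\xi_l\to 0$.

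Next I would establish the common small-argument asymptotics. If a density $f$ vanishes at the origin to order $m_l-1$, say $f(x)\sim C\,x^{m_l-1}$, then $F(x)=\int_0^x f\sim \frac{C}{m_l}x^{m_l}$, so $f(x)/F(x)\sim m_l/x$ and the constant $C$ cancels. Applying this, the uncorrelated ratio tends to $m_l/\xi_l$, and the correlated ratio $Q_l(\xi_l)/P_l(\xi_l)=f_{\Delta_l}(\xi_l)/F_{\Delta_l}(\xi_l)$ tends to $m_l/\xi_l$ as well, \emph{provided} $f_{\Delta_l}$ vanishes to the same order $m_l-1$ at the origin. This order-of-vanishing claim is the heart of the argument and is where the full-rank hypothesis enters: $\Delta_l=\sum_{k=l}^{N_t}D_k^2|R_{l,k}|^2$ is a sum of independent Gamma variables whose shape parameters are $N_r-l+1$ for the diagonal term and $1$ for each of the $N_t-l$ off-diagonal terms, for a total shape $m_l$. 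When $\boldsymbol R_t$ has full rank all $D_k^2>0$, so no term degenerates and the convolution has density $f_{\Delta_l}(x)\sim C_l\,x^{m_l-1}$ with $C_l>0$, the same order as in the uncorrelated case. I expect this step to be the main obstacle: read directly off the partial-fraction form (\ref{eq-low-bound2}) it amounts to showing that all contributions of order $\xi_l,\dots,\xi_l^{m_l-1}$ in $P_l(\xi_l)$ cancel among the coefficients $a_k^{(l)}$ and $a_1^{(l+k)}$, whereas the probabilistic (sum-of-Gammas) reading makes this cancellation automatic, and that is the route I would take.

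Finally I would assemble the pieces. For each index with $b_l<1$, combining $A_l\sim(1-b_l)\xi_l$ with $Q_l(\xi_l)/P_l(\xi_l)\sim m_l/\xi_l$ (and likewise for the uncorrelated ratio) gives
\[
\lim_{\eta\to\infty}A_l\,\frac{Q_l(\xi_l)}{P_l(\xi_l)}=(1-b_l)\,m_l=\lim_{\eta\to\infty}A_l\,\frac{f_{G(m_l,1)}(\xi_l)}{F_{G(m_l,1)}(\xi_l)},
\]
where $F_{G(m_l,1)}$ denotes the CDF associated with the density $f_{G(m_l,1)}$ of Lemma \ref{L1}. Hence the two $l$-th terms share the same limit. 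Since both sums in Corollary \ref{C1} are finite (over $l=1,\dots,t$), summing these term-wise equalities yields $\lim_{\eta\to\infty}\hat{d}^{uncorr}(r,\eta)=\lim_{\eta\to\infty}\hat{d}^{corr}(r,\eta)=\sum_{l:\,b_l<1}(1-b_l)(N_r+N_t-2l+1)$, which is the claimed identity (\ref{equal-d}).
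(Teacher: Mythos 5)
Your analysis of the indices with $b_l \leq 1$ is correct and follows essentially the same route as the paper's proof: there too, the argument reduces to showing that both the uncorrelated ratio and the correlated ratio $Q_l(\xi_l)/P_l(\xi_l)$ behave like $(N_t+N_r-2l+1)/\xi_l$ as $\xi_l\to 0$, after observing that $Q_l$ is the density $f_{\Delta_l}$ and $P_l$ its integral. The one place where you genuinely diverge is the key order-of-vanishing step: the paper establishes $f_{\Delta_l}^{(n)}(0)=0$ for $n\leq N_t+N_r-2l-1$ by writing $f_{\Delta_l}^{(n)}(0)$ as a contour integral of $(-jv)^{n}\Psi_{\Delta_l}(jv)$ and applying the Residue Theorem, using that the denominator of $\Psi_{\Delta_l}$ has degree $N_t+N_r-2l+1$; you instead read $\Delta_l$ as a sum of independent Gamma variables with total shape $m_l=N_t+N_r-2l+1$ (all contributions present because full rank gives $D_k^2>0$) and invoke the convolution behavior $f_{\Delta_l}(x)\sim C_l\,x^{m_l-1}$. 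Both arguments use the full-rank hypothesis in the same place, and yours is a legitimate, arguably more transparent, substitute for the paper's residue computation.

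However, there is a gap: you never treat indices with $b_l>1$, and your claim that it ``suffices to treat the indices with $b_l<1$'' is only valid when every $b_l\leq 1$. The coefficients $b_l$ are arbitrary positive numbers constrained only by $\sum_{l=1}^{t}b_l=r$, and $r$ may exceed $1$, so some $b_l$ can exceed $1$. For such an index, $\xi_l\to\infty$ rather than $\xi_l\to 0$, your small-argument asymptotics do not apply, and the prefactor $A_l\approx(1-b_l)\xi_l$ diverges (to $-\infty$), so the term neither vanishes by inspection nor is covered by your density/CDF lemma. This case must be argued separately, as the paper does in its first bullet: when $\xi_l\to\infty$, one has $\Gamma_{inc}(\xi_l,m_l)\to 1$ and $P_l(\xi_l)\to 1$, while the numerators $\xi_l^{m_l-1}e^{-\xi_l}/(m_l-1)!$ and $Q_l(\xi_l)$ decay exponentially in $\xi_l$, i.e. like $e^{-c\,\eta^{b_l-1}}$ with $c>0$, which dominates the polynomial growth of $A_l$; hence both the correlated and uncorrelated $l$-th terms tend to $0$ and the limits still agree. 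Adding this case makes your proof complete.
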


\begin{proof}
For convenience, the proof is presented in Appendix \ref{app-1}.
\end{proof}\

The result in Lemma \ref{lemmma-asymptoic-behavior} is very
insightful. It states that, at a high-SNR regime and at a given
multiplexing gain $r$, the diversity estimate is independent of
the spatial correlation. More interestingly, our asymptotic
diversity estimates coincide with the well-known asymptotic DMT
characterization as summarized in the following theorem.\\

\begin{theorem}\label{th-equality-asymp}
Assuming full-rank transmit spatial correlation, the optimal DMT,
for the uncorrelated and correlated cases, is given by the
asymptotic diversity estimate, that is:
\begin{equation}\label{eq-equality-asymp}
    \lim_{\eta \rightarrow \infty}\hat{d}^{uncorr}(r,\eta)=\lim_{\eta \rightarrow
    \infty}\hat{d}^{corr}(r,\eta)=d_{asym},
\end{equation}
where $d_{asym}=-\lim_{\eta \rightarrow
\infty}\frac{\log_{2}P_{out}}{\log_{2}\eta}$.
\end{theorem}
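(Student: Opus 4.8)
The plan is to prove the two equalities in (\ref{eq-equality-asymp}) in turn. The first, $\lim_{\eta\to\infty}\hat{d}^{uncorr}(r,\eta)=\lim_{\eta\to\infty}\hat{d}^{corr}(r,\eta)$, is exactly Lemma \ref{lemmma-asymptoic-behavior}, so it suffices to identify this common limit with $d_{asym}$; since the two limits coincide I would work only with the simpler uncorrelated estimate (\ref{d-est-uncorr}). Write $\nu_l:=N_r+N_t-2l+1$ for the degrees-of-freedom parameter appearing in (\ref{eq-low-bound}), and let $L(\eta)$ denote the maximized right-hand side of (\ref{eq-low-bound}). By construction, and by the envelope theorem applied to the maximization over $\{b_l\}$, one has $\hat{d}^{uncorr}(r,\eta)=-\eta\,\partial_\eta\ln L(\eta)$, the optimal coefficients $b_l=b_l(\eta)$ being held fixed while differentiating.

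The heart of the argument is to show that this logarithmic derivative converges to the diversity order of $L(\eta)$. I would substitute the high-SNR forms of the three ingredients of (\ref{d-est-uncorr}): $\xi_l\sim N_t\,g^{b_l}\,\eta^{b_l-1}$, the small-argument expansion $\Gamma_{inc}(\xi_l,\nu_l)\sim\xi_l^{\nu_l}/\nu_l!$, and the elementary identity $(1+g\eta)^{b_l}-b_l g\eta(1+g\eta)^{b_l-1}-1\sim(1-b_l)(g\eta)^{b_l}$. A short computation then shows that, for every fixed $b_l\in(0,1)$, the $l$-th summand of (\ref{d-est-uncorr}) tends to $(1-b_l)\nu_l$, whereas for $b_l=1$ the prefactor vanishes identically so the summand is $0$. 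Hence $\lim_{\eta\to\infty}\hat{d}^{uncorr}(r,\eta)=\sum_{l=1}^{t}(1-b_l)\nu_l$, which is precisely the diversity order $-\lim_{\eta\to\infty}\log_2 L(\eta)/\log_2\eta$ of the lower bound.

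It remains to carry out the maximization of $L(\eta)$, which in the high-SNR limit is the minimization of the exponent $\sum_l(1-b_l)\nu_l$ subject to $\sum_l b_l=r$ and $0\le b_l\le1$ (values $b_l>1$ only saturate a factor and bring no gain). Because the weights $\nu_l$ are strictly decreasing in $l$, minimizing $\sum_l(1-b_l)\nu_l$ amounts to loading the largest weights first: $b_l=1$ for $l\le\lfloor r\rfloor$, $b_{\lfloor r\rfloor+1}=r-\lfloor r\rfloor$, and the remaining $b_l\to0$. Substituting this allocation yields $\sum_{l=\lfloor r\rfloor+1}^{t}\nu_l$, which at each integer $r=k$ reduces to $(N_t-k)(N_r-k)$ and is linear in $r$ in between; this is exactly the piecewise-linear optimal tradeoff $d^\ast(r)$ of \cite{Zheng03}. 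Finally, since $P_{out}\ge L(\eta)$ forces $d_{asym}\le d^\ast(r)$, and since the Zheng--Tse characterization \cite{Zheng03}, extended to full-rank spatial correlation in \cite{Chang}, gives $d_{asym}=d^\ast(r)$ for both the uncorrelated and correlated channels, the common limit equals $d_{asym}$ and (\ref{eq-equality-asymp}) follows.

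I expect the main obstacle to be the interchange of limits in the second step: the maximizing coefficients $b_l(\eta)$ depend on $\eta$ and drift to the boundary values $0$ and $1$ as $\eta\to\infty$, precisely where the small-argument asymptotics of $\Gamma_{inc}$ break down. For $b_l\to1$ the argument $\xi_l$ no longer tends to $0$, but the prefactor $(1-b_l)$ suppresses the bounded ratio; for $b_l\to0$ the ratio in (\ref{d-est-uncorr}) becomes indeterminate and must be controlled by a uniform estimate keeping each summand within $o(1)$ of $(1-b_l)\nu_l$ along the optimal trajectory. Making these two degenerate regimes rigorous, together with the envelope-theorem justification for differentiating the maximized bound, is the delicate part of the argument.
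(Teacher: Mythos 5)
Your proposal is correct and takes essentially the same route as the paper's own proof: reduction to the uncorrelated estimate via Lemma \ref{lemmma-asymptoic-behavior}, the same high-SNR expansions ($\xi_l \approx N_t g^{b_l}\eta^{b_l-1}$, $K_l \approx (1-b_l)(g\eta)^{b_l}$, $\Gamma_{inc}(x,m)\approx x^m/m!$) giving $\lim_{\eta\to\infty}\hat{d}^{uncorr}(r,\eta)=\sum_{l}(1-b_l)^{+}(N_t+N_r-2l+1)$, and then identification of the induced optimization over the $b_l$ with the asymptotic DMT of \cite{Zheng03}. The only difference is cosmetic: you solve the resulting linear program explicitly by greedy loading to recover the piecewise-linear curve through the points $\left(k,(N_t-k)(N_r-k)\right)$ and invoke \cite{Zheng03} and \cite{Chang} for the value of $d_{asym}$, whereas the paper stops at the variational form $\min_{\alpha\in\mathcal{A'}}\sum_{l}(N_r+N_t-2l+1)\alpha_l$ and matches it verbatim to the characterization of $d_{asym}$ in \cite{Zheng03}; your closing remarks on the envelope theorem and on the $b_l(\eta)$ drifting to the boundary are rigor points the paper glosses over, not gaps in your argument.
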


\begin{proof}
The proof is presented in Appendix \ref{app-2}.
\end{proof}\

Theorem \ref{th-equality-asymp} states that our asymptotic
diversity estimate is exactly the high-SNR DMT. Therefore, it can
be seen as a generalization of the DMT for the spatially
correlated and uncorrelated channels. Note that Lemma
\ref{lemmma-asymptoic-behavior} and Theorem
\ref{th-equality-asymp} agree with the results in \cite[Corollary
1]{Nara_IT}. On the other hand, Lemma
\ref{lemmma-asymptoic-behavior} and Theorem
\ref{th-equality-asymp} confirm a recently established result
concerning  the asymptotic diversity \cite{Chang}. However, our
result is broader, since it allows understanding the impact of
spatial correlation at finite-SNR which is not discussed in
\cite{Chang}. More importantly, the framework presented here
provides some guidelines on designing space-time codes at
practical SNR values. As an example, the following corollary
defines the maximum achievable diversity gain by any full
diversity -based space-time code.\\

\begin{corollary}[Maximum diversity]
\label{C2} The maximum diversity gain is the same for both
correlated and uncorrelated spatial fading channels and is given
by:
\begin{IEEEeqnarray}{rCl}
\label{eq-dmax}
\nonumber
   \hat{d}^{max}(\eta) & = & \lim_{r\rightarrow
    0}\hat{d}^{uncorr}(r,\eta)=\lim_{r\rightarrow
    0}\hat{d}^{corr}(r,\eta)\\
    & = & N_tN_r\left[1-\frac{g\eta}{(1+g\eta)\ln(1+g\eta)}\right].
\end{IEEEeqnarray}
\end{corollary}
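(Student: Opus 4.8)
The plan is to take the limit $r \to 0$ of the finite‑SNR diversity estimates from Corollary~\ref{C1} and verify that both the uncorrelated and correlated expressions collapse to the same closed form in~(\ref{eq-dmax}). Since Lemma~\ref{lemmma-asymptoic-behavior} and Theorem~\ref{th-equality-asymp} already establish equality of the two estimates in the $\eta \to \infty$ regime, and since the two expressions~(\ref{d-est-uncorr}) and~(\ref{d-est-corr}) share an identical structure (differing only in the ratio $Q_l/P_l$ versus its uncorrelated counterpart), I would first concentrate on evaluating $\lim_{r\to 0}\hat{d}^{uncorr}(r,\eta)$ explicitly, and then argue that the correlated case yields the same limit by a parallel computation.

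First I would observe that as $r \to 0$, the constraint $r = \sum_{l=1}^{t} b_l$ forces each $b_l \to 0$ (for the optimizing choice, or simply for any admissible split), and consequently $\xi_l = \frac{N_t}{\eta}\bigl((1+g\eta)^{b_l}-1\bigr) \to 0$ for every $l$. The crux is therefore a small‑argument expansion. For the uncorrelated estimate, the relevant factor is
\begin{equation*}
  \frac{\xi_l^{N_t+N_r-2l}\,e^{-\xi_l}/(N_t+N_r-2l)!}{\Gamma_{inc}(\xi_l,\,N_r+N_t-2l+1)}.
\end{equation*}
Using the series $\Gamma_{inc}(x,a)=\frac{1}{(a-1)!}\int_0^x t^{a-1}e^{-t}\,dt \sim \frac{x^{a}}{a!}$ as $x\to 0$, the ratio above tends to $N_t+N_r-2l+1$. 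Meanwhile the prefactor $\frac{N_t}{\eta}\bigl((1+g\eta)^{b_l}-b_l g\eta(1+g\eta)^{b_l-1}-1\bigr)$ must be expanded to first order in $b_l$: writing $(1+g\eta)^{b_l}=1+b_l\ln(1+g\eta)+O(b_l^2)$ and collecting terms, the leading contribution is proportional to $b_l$ times a factor involving $\ln(1+g\eta)$ and $\frac{g\eta}{1+g\eta}$. Multiplying the prefactor by the limiting ratio and summing over $l$, the telescoping sum $\sum_{l=1}^{t}(N_t+N_r-2l+1)(\ldots)$ should reassemble into $N_tN_r\bigl[1-\frac{g\eta}{(1+g\eta)\ln(1+g\eta)}\bigr]$.

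For the correlated case, I would apply the same small‑$\xi_l$ asymptotics to $Q_l(\xi_l)/P_l(\xi_l)$. As $\xi_l\to 0$, each $\Gamma_{inc}(\xi_l/D^2,k)$ in $P_l$ behaves like $\frac{(\xi_l/D^2)^{k}}{k!}$, so the dominant term in $P_l$ comes from the smallest power $k=1$ (the $a_1^{(l+k)}$ sum together with the $k=1$ term of the first sum), while $Q_l$ is dominated by the matching $k=1$ exponential terms evaluating to constants near $\xi_l=0$. The ratio $Q_l/P_l$ should therefore blow up like $1/\xi_l$ in exactly the way needed to cancel the $\xi_l$‑dependence and reproduce the same coefficient structure as the uncorrelated limit; the coefficient identities for $a_k^{(l)}$ and $a_1^{(l+k)}$ (which sum to one via the partial‑fraction structure of $\Psi_{\Delta_l}$) guarantee the correlation scalars $D_k^2$ drop out. \textbf{The main obstacle} I anticipate is precisely this last cancellation: showing that the weighted sums of $a$‑coefficients combine so that the spatial‑correlation parameters $D_k^2$ vanish from the $r\to 0$ limit. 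I expect this to follow from the normalization $\sum_k a_k^{(l)} + \sum_k a_1^{(l+k)} = 1$ (inherited from $\Psi_{\Delta_l}(0)=1$) together with careful bookkeeping of which terms survive at leading order in $\xi_l$, but verifying it rigorously is where the real work lies. Once that cancellation is in hand, both limits reduce to the single expression in~(\ref{eq-dmax}), completing the proof.
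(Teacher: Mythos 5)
Your overall plan --- send $r\to 0$ in Corollary \ref{C1}, expand everything for small $\xi_l$, and use $\sum_{l=1}^{t}(N_t+N_r-2l+1)=t(N_t+N_r-t)=N_tN_r$ --- is the natural route (the paper itself defers the proof to \cite{Rezki-vtc06}, but this is clearly the intended computation). For the uncorrelated estimate it works, modulo a slip in your write-up: the ratio $\xi_l^{N_t+N_r-2l}e^{-\xi_l}/\bigl[(N_t+N_r-2l)!\,\Gamma_{inc}(\xi_l,N_t+N_r-2l+1)\bigr]$ does \emph{not} tend to $N_t+N_r-2l+1$; it diverges as $(N_t+N_r-2l+1)/\xi_l$. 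That divergence is essential: the prefactor $K_l=(1+g\eta)^{b_l}-b_lg\eta(1+g\eta)^{b_l-1}-1\approx b_l\bigl[\ln(1+g\eta)-\tfrac{g\eta}{1+g\eta}\bigr]$ vanishes like $b_l$, and $\xi_l\approx\tfrac{N_t}{\eta}b_l\ln(1+g\eta)$ also vanishes like $b_l$, so only the product of the $O(b_l)$ prefactor with the $O(1/\xi_l)$ ratio has the finite, path-independent limit $(N_t+N_r-2l+1)\bigl[1-\tfrac{g\eta}{(1+g\eta)\ln(1+g\eta)}\bigr]$. As literally written (vanishing prefactor times finite limiting ratio), your uncorrelated limit would be $0$.

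The genuine gap is in the correlated case. Your proposed mechanism --- $P_l$ dominated by its $k=1$ terms, $Q_l$ tending to the matching constants, with the normalization $\sum_k a_k^{(l)}+\sum_k a_1^{(l+k)}=1$ ensuring the $D_k^2$ drop out --- would fail, because those ``dominant'' coefficients sum to \emph{zero}, not to one. Indeed $Q_l=f_{\Delta_l}$, and since the denominator of $\Psi_{\Delta_l}(jv)$ has degree $N_t+N_r-2l+1$, the Residue-Theorem argument in Appendix \ref{app-1} gives $f_{\Delta_l}^{(n)}(0)=0$ for all $n\leq N_t+N_r-2l-1$; in particular, whenever $N_t+N_r-2l\geq 1$, one has $Q_l(0)=a_1^{(l)}D_l^{-2}+\sum_k a_1^{(l+k)}D_{l+k}^{-2}=0$, i.e., exactly the terms you call dominant cancel each other. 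Had your naive dominance held, you would get $Q_l/P_l\approx 1/\xi_l$ (coefficient $1$ rather than $N_t+N_r-2l+1$), and the correlated limit would come out as $t\bigl[1-\tfrac{g\eta}{(1+g\eta)\ln(1+g\eta)}\bigr]$, contradicting (\ref{eq-dmax}). The correct statement, already proved in Appendix \ref{app-1} for the $\xi_l\to 0$ regime (there triggered by $b_l<1$, here by $r\to 0$), is that the first nonvanishing Taylor coefficients of $Q_l$ and $P_l$ occur at orders $N_t+N_r-2l$ and $N_t+N_r-2l+1$ respectively, so that $Q_l/P_l\approx (N_t+N_r-2l+1)/\xi_l$ with the correlation-dependent leading constant cancelling in the ratio. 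Plugging this in makes the correlated limit identical, term by term, to the uncorrelated one. Note also that the identity you invoke, inherited from $\Psi_{\Delta_l}(0)=1$, encodes $P_l(\infty)=1$ --- behavior at infinity --- and carries no information about the origin, which is where the whole computation lives.
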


\begin{proof}
The proof is presented in \cite{Rezki-vtc06}.
\end{proof}
Corollary \ref{C2} agrees with \cite[Theorem 6]{Nara_IT} even
though our diversity estimates are different from those in
\cite{Nara_IT}. Corollary \ref{C2} also indicates that the
estimated maximum diversity gain is unaffected by the spatial
correlation. This has been previously pointed out for a high-SNR
regime in \cite{Paulraj}. Corollary \ref{C2} is however stronger,
since it holds for all values of $\eta$, and in particular for
$\eta=+\infty$.\ It is to be reminded that in establishing this
last result, we have assumed that $\boldsymbol R_{t}$ has full
rank. However, should $\boldsymbol R_{t}$ be rank-deficient, then
$\boldsymbol H$ would also be rank-deficient and it may be
expected that the maximum diversity gain may be lower than that
given by (\ref{eq-dmax}). Moreover, as suggested by
(\ref{UB-mutu-inf}), all the results given here apply when the
receive spatial correlation is considered separately.
\begin{figure}[t]
  \begin{center}
    \includegraphics[scale=0.29]{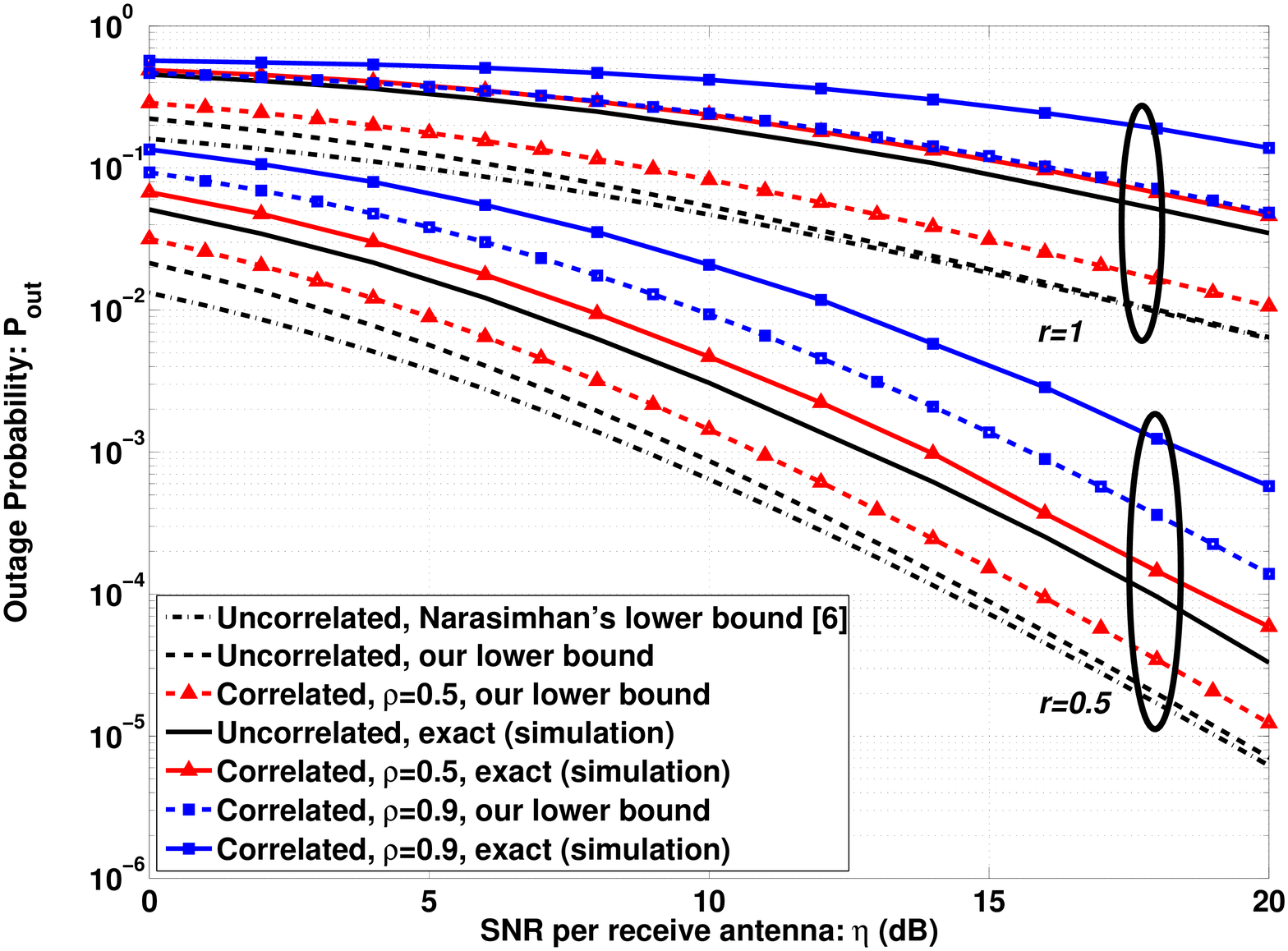}
    \caption{Comparison of lower bounds on $P_{out}$ and the exact simulation in the uncorrelated ($\rho=0$), and correlated ($\rho=0.5,0.9$) spatial fadings, for $N_t=N_r=2$.}
    \label{divmux}
  \end{center}
\end{figure}
\section{Numerical Results}\label{result}
In this section, simulation results for an MEA system with
$N_{t}=N_{r}=2$ are presented. The transmit correlation matrix
$\boldsymbol R_{t}$ is chosen according to a single coefficient
spatial correlation model \cite{Zelst,Paulraj}, i.e., the entry of
$\boldsymbol R_{t}$ at the $i^{th}$ row and the $j^{th}$ column is
$(R_{t})_{i,j} = \rho^{{(i-j)}^2}$. The lower bound on $P_{out}$
given by Theorem \ref{th-LB} is plotted in Fig.\ref{divmux}
together with the exact $P_{out}$, given by simulation, for
$r=0.5$ and $r=1$. Figure \ref{divmux} also shows the lower bound
found by Narasimhan in \cite{Nara_IT} for the uncorrelated case.
As was proven in section \ref{outage}, our bound is tighter
especially at low SNR values. More importantly, our lower bounds
follow the same shape as the exact curves and the gap between the
exact $P_{out}$ and the lower bound is independent of the
correlation coefficients, regardless of the SNR values. For
comparison in the transmit spatial correlation case, we have
plotted in Fig. \ref{divmux2} the exact $P_{out}$, our lower bound
given by (\ref{eq-low-bound2}) and the lower bound in \cite{
Nara_IT}, for $r=0.5$ and $r=1$. All curves in Fig. \ref{divmux2}
have been obtained using the same transmit spatial correlation
matrix in \cite{ Nara_IT}. As can be seen in Fig. \ref{divmux2},
our lower bound is again slightly tighter than Narasimhan's lower
bound at low SNR. Beyond SNR=30 dB, the lower bound curves are
exactly the same.
\begin{figure}[t]
  \begin{center}
    \includegraphics[scale=0.29]{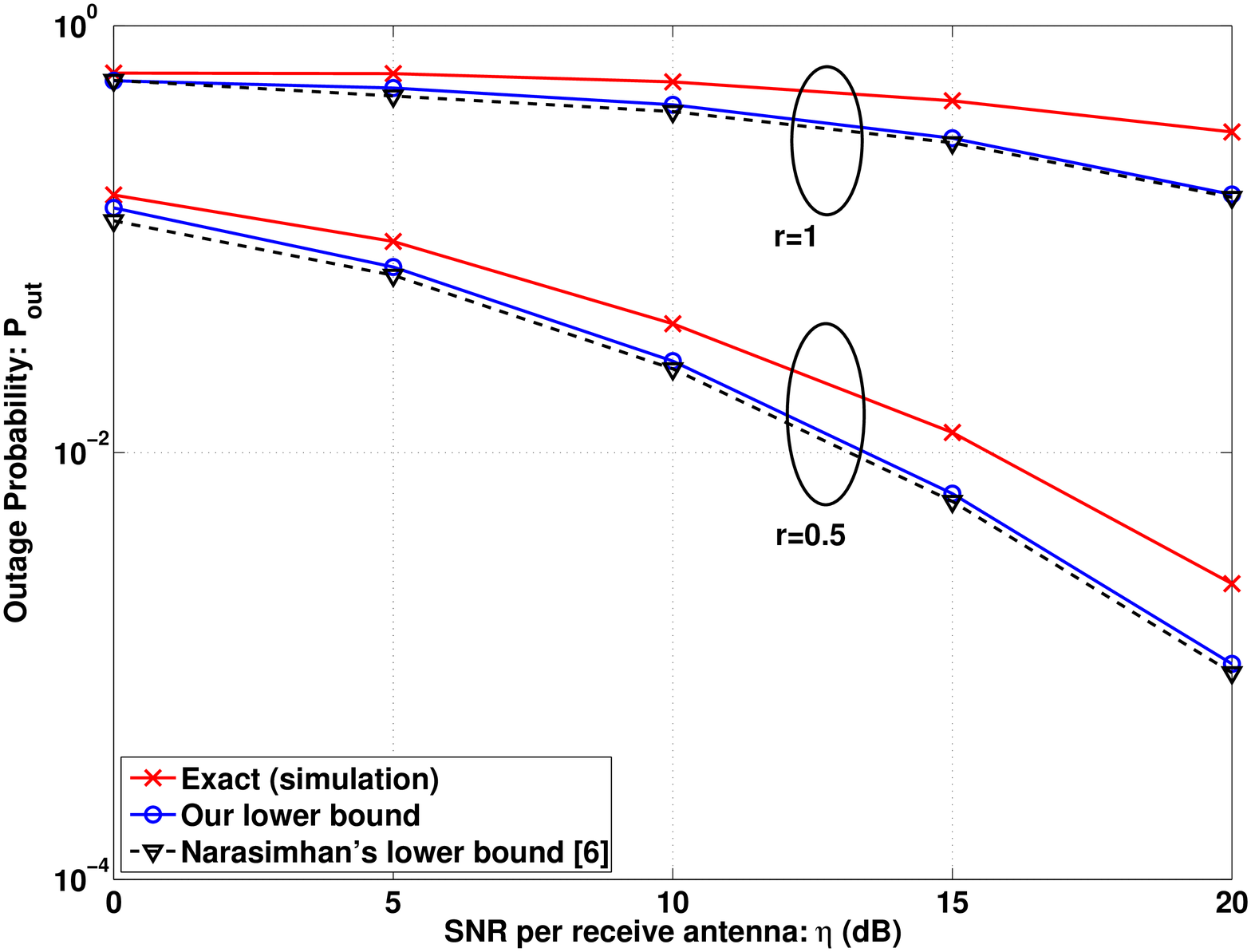}
    \caption{Comparison of our lower bound and Narasimhan's lower bound, for the transmit spatial correlation given by (49) in \cite{ Nara_IT} and for $N_t=N_r=2$.}
    \label{divmux2}
  \end{center}
\end{figure}
The exact diversity gain, obtained by Monte-Carlo simulations
using (\ref{d}) and an estimated diversity gain computed using
Corollary \ref{C1} are plotted in Fig.\ref{d3} for an SNR=15 dB.
Figure \ref{d3} indicates that the DMT estimate is a good fit to
the exact simulation tradeoff curve. Therefore, the estimated
diversity can be used to obtain an insight on the DMT over
spatially correlated and uncorrelated channels while avoiding time
consuming simulations. Interestingly, it can be noticed that with
a correlation coefficient $\rho=0.5$, the diversity gain is only
slightly degraded and one may expect to achieve a quasi-equal
uncorrelated diversity gain as shown in Fig.\ref{d3}. However, the
diversity is substantially degraded when $\rho=0.9$. For example,
as illustrated in Fig.\ref{d2}, an MEA system operating at $r \geq
0.8$ and an SNR of 5 dB in a moderately correlated channel
($\rho=0.5$), achieves a better diversity gain than a system
operating at the same $r$ and an SNR of 10 dB in a highly
correlated channel ($\rho=0.9$). These observations are confirmed
with the exact diversity curves.
\begin{figure}[t]
  \begin{center}
    \includegraphics[scale=0.29]{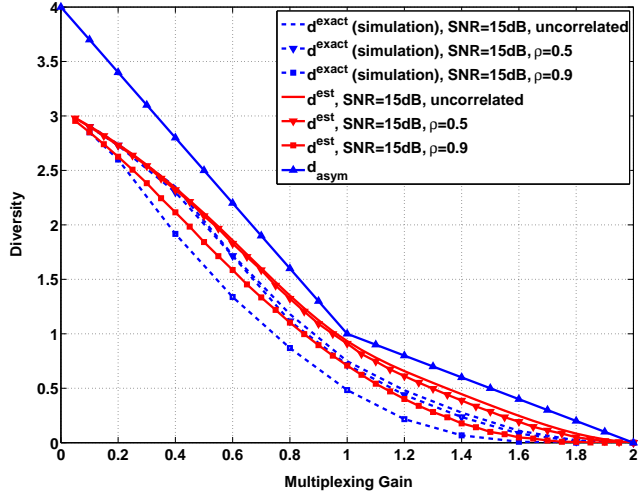}
    \caption{The impact of spatial correlation on the diversity estimates for $N_t=N_r=2$ and SNR=15 dB.}
    \label{d3}
  \end{center}
\end{figure}
In Fig. \ref{d1}, we have plotted the relative diversity-estimate
gain, defined as $\frac{\hat{d}^{corr}}{\hat{d}^{uncorr}}$ for
different SNR values. As predicted by Lemma
\ref{lemmma-asymptoic-behavior}, the relative diversity-estimate
gain converges toward 1 as $\eta \rightarrow \infty$ regardless of
the multiplexing gain values. However, the convergence would be
faster for small values of $r$. Finally, as predicted by Theorem
\ref{th-equality-asymp}, Fig. \ref{d0} illustrates the convergence
of the uncorrelated diversity estimate to the asymptotic DMT as
$\eta \rightarrow \infty$.
\begin{figure}[t]
  \begin{center}
    \includegraphics[scale=0.29]{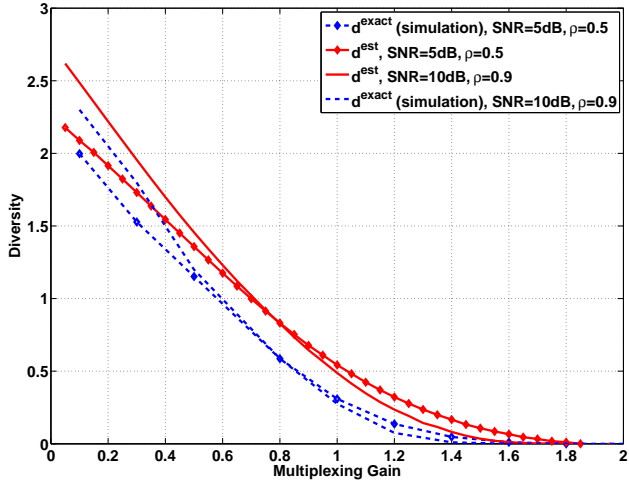}
    \caption{Comparison of our diversity estimate and the exact (simulation) for $\rho=0.5$, $SNR=5$ dB and $\rho=0.9$, $SNR=10$ dB, when $N_t=N_r=2$.}
    \label{d2}
  \end{center}
\end{figure}
\section{Conclusion}\label{conclusion}
In this paper, we have addressed the finite-SNR
diversity-multiplexing tradeoff over spatially correlated MEA
channels. We first derived lower bounds on the outage probability
for spatially correlated and uncorrelated MEA channels. Then,
using these bounds, estimates of the corresponding DMT were
determined. The diversity estimates provide an insight on the
finite-SNR DMT of MEA systems. Furthermore, extensions to the
asymptotic behavior of the diversity-estimate gain, as either the
SNR goes to infinity or the multiplexing gain tends toward zero
have been derived. The asymptotic behavior provides some
guidelines for the design of diversity-oriented space-time codes.
More interestingly, this asymptotic analysis reveals that our
framework includes well-known results about the asymptotic DMT for
correlated and uncorrelated channels. Hence, the framework
presented here can be seen as a generalization of the asymptotic
DMT. Finally, it is worth mentioning that although we have focused
on the transmit spatial correlation, we showed that all the
results still hold when the receive spatial correlation is
considered instead.
\begin{figure}[t]
  \begin{center}
    \includegraphics[scale=0.29]{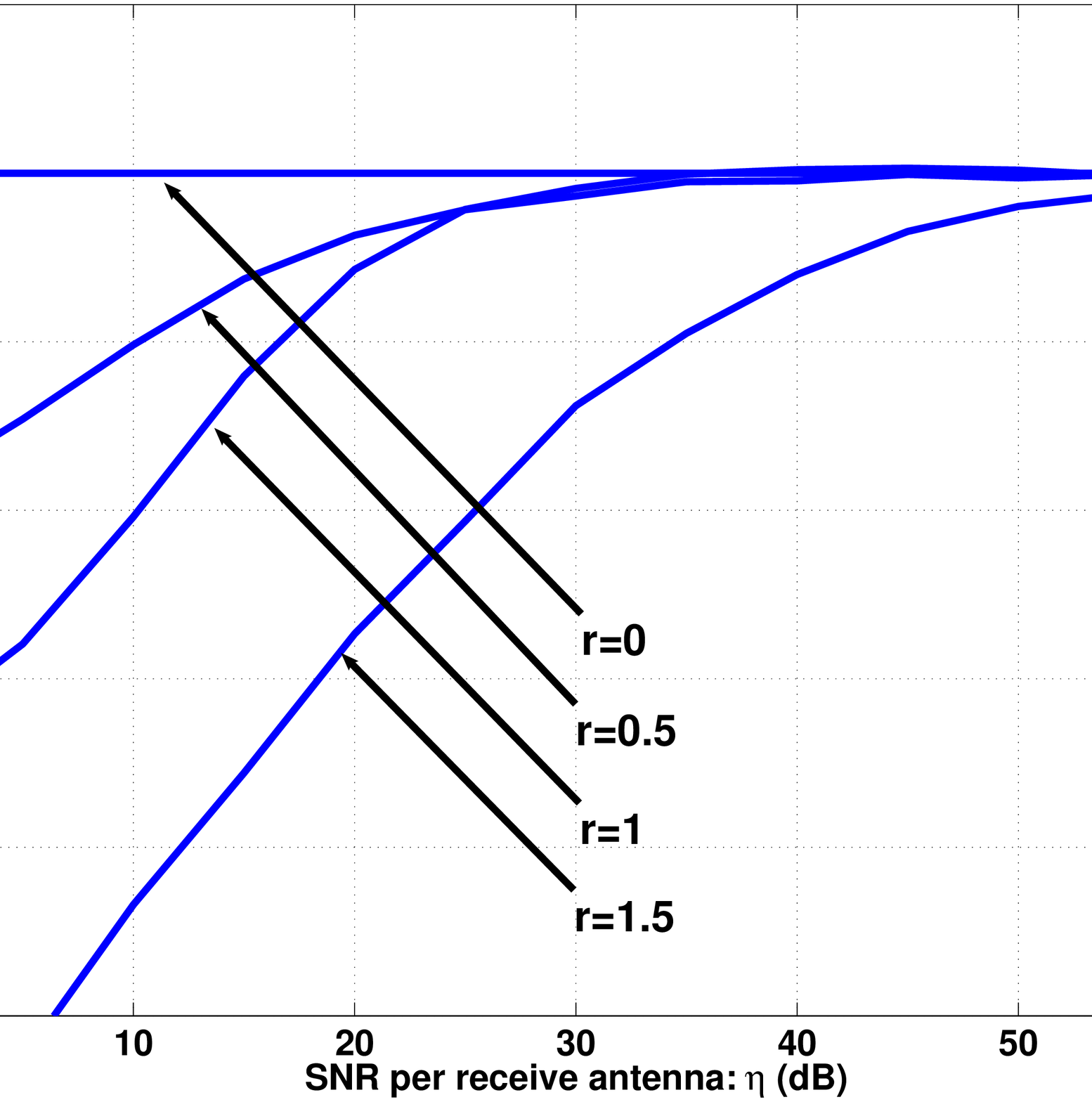}
    \caption{Relative diversity-estimate gain versus SNR for different multiplexing gains $r$.}
    \label{d1}
  \end{center}
\end{figure}
\begin{figure}[t]
  \begin{center}
    \includegraphics[scale=0.29]{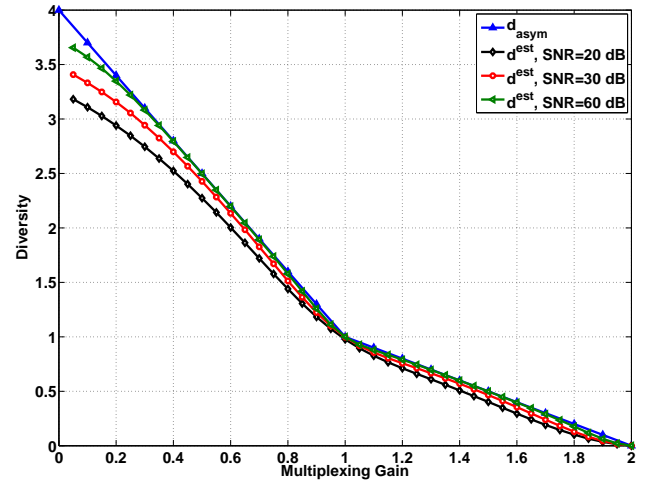}
    \caption{Convergence of the uncorrelated diversity estimate $\hat{d}^{uncorr}(r,\eta)$ to the asymptotic diversity $d_{asym}$, as $SNR \rightarrow \infty$.}
    \label{d0}
  \end{center}
\end{figure}

\appendices
\section{Proof of Lemma \ref{lemmma-asymptoic-behavior}}\label{app-1}
First, note that as $\eta \rightarrow \infty$, we have:
\begin{equation}\label{xil-asymp}
   \xi_{l} \approx N_{t}g^{b_{l}}\eta^{b_{l}-1}.
\end{equation}
Let us define $  J_{l}(\xi_{l}) = \frac{\xi_{l}^{N_t + N_r
-2l}e^{-\xi_l}/(N_t+N_r
 -2l)!}{{\Gamma_{inc}(\xi_l,N_r + N_t -2l+1)}}$, and $K_{l}(\xi_{l}) = \left((1+g\eta)^{b_l}-b_l g \eta (1+g\eta)^{b_l
 -1}-1\right)$, for
$l=1,\ldots,t$. To prove Lemma \ref{lemmma-asymptoic-behavior}, it
suffices to prove that:
\begin{equation}
\label{eq-nec-con}
\lim_{\eta \rightarrow \infty}J_{l}(\xi_{l})K_{l}(\xi_{l}) =
\lim_{\eta \rightarrow
\infty}\frac{Q_{l}(\xi_{l})}{P_{l}(\xi_{l})}K_{l}(\xi_{l}).
\end{equation}
Indeed, to prove (\ref{eq-nec-con}) for each $l=1,\ldots,t$, we
distinguish three cases:
\begin{itemize}
    \item $b_{l} > 1$:
In this case, $\xi_{l}\rightarrow \infty$. Since
$\Gamma_{inc}(\xi_l,N_r + N_t -2l+1) \rightarrow 1$ as $\xi_{l}
\rightarrow \infty$, then $J_{l}(\xi_{l})K_{l}(\xi_{l})
\rightarrow 0$, and so does the term
$\frac{Q_{l}(\xi_{l})}{P_{l}(\xi_{l})}K_{l}(\xi_{l})$.
    \item $b_{l} < 1$:
In this case, $\xi_{l}\rightarrow 0$. Note that
$J_{l}(\xi_{l})=\frac{f'(\xi_{l})}{f(\xi_{l})}$, where $
f(\xi_{l})=\Gamma_{inc}(\xi_l,N_r + N_t -2l+1)$, and $f'(\xi_{l})$
denotes the derivative of $f(\xi_{l})$. Using Taylor expansion of
$f(\xi_{l})$ and $f'(\xi_{l})$ around $0$, we obtain:
\begin{IEEEeqnarray}{rCl}
\nonumber
    J_{l}(\xi_{l}) & \approx & \frac{1/(N_{t}+N_{r}-2l)! \xi_{l}^{N_{t}+N_{r}-2l}}{1/(N_{t}+N_{r}-2l+1)!
    \xi_{l}^{N_{t}+N_{r}-2l+1}}\\
    & = & \frac{N_{t}+N_{r}-2l+1}{\xi_{l}}.
    \label{Jl-approx}
\end{IEEEeqnarray}

On the other hand, $Q_{l}(\xi_{l})=f_{\Delta_{l}}(\xi_{l})$. Since
the $n^{th}$ derivative of $f_{\Delta_{l}}(x)$ can also be
expressed by $f_{\Delta_{l}}^{(n)}(x)= \frac{1}{2 \pi
j}\int_{-j\infty}^{+j\infty}
    (-jv)^{n}e^{-jvx}
    \Psi_{\Delta_{l}}(jv)d(jv)$, $n\geq0$, then it can be shown, using the Residue Theorem, that
$f_{\Delta_{l}}^{(n)}(0)=0$ for $n \leq N_{t}+N_{r}-2l-1$. This is
because $N_{t}+N_{r}-2l+1$ is the degree of
$\Psi_{\Delta_{l}}(jv)$'s denominator. Observing again that
$Q_{l}(\xi_{l})$ is the derivative of $P_{l}(\xi_{l})$ and using
Taylor expansion of $Q_{l}(\xi_{l})$ and $P_{l}(\xi_{l})$ to the
$(N_{t}+N_{r}-2l)^{th}$ term, we also find that
$\frac{Q_{l}(\xi_{l})}{P_{l}(\xi_{l})} \approx
\frac{N_{t}+N_{r}-2l+1}{\xi_{l}}$.
    \item $b_{l} = 1$:
In this case, $\xi_{l}=g N_{t}$. Since $J_{l}(\xi_{l})$ and
$\frac{Q_{l}(\xi_{l})}{P_{l}(\xi_{l})}$ are finite, and
$K_{l}(\xi_{l})=0$, (\ref{eq-nec-con}) still holds.
\end{itemize}

\section{Proof of Theorem \ref{th-equality-asymp}}\label{app-2}
To prove Theorem \ref{th-equality-asymp}, it is sufficient to
prove that $ \lim_{\eta \rightarrow
\infty}\hat{d}^{uncorr}(r,\eta)=d_{asym}$, since the other
equality follows from Lemma \ref{lemmma-asymptoic-behavior}. In
Appendix \ref{app-1}, it was shown that, if $b_{l} \geq 1$,
$l=1,\ldots,t$, then $J_{l}(\xi_{l})K_{l}(\xi_{l}) \rightarrow 0$
as $\eta \rightarrow \infty$. Thus, only the case $b_{l} < 1$ is
of interest. Moreover, using (\ref{xil-asymp}), (\ref{Jl-approx})
and the fact that $K_{l}(\xi_{l})$ can be approximated by
$K_{l}(\xi_{l}) \approx g^{b_{l}}(1-b_{l}) \eta^{b_{l}}$, we
obtain that $J_{l}(\xi_{l})K_{l}(\xi_{l}) \approx
    \frac{N_{t}+N_{r}-2l+1}{N_{t}}(1-b_{l}) \eta$. Hence, $\hat{d}^{uncorr}(r,\eta)$ given by (\ref{d-est-uncorr}) can be
written as:
\begin{IEEEeqnarray}{rCl}
\nonumber
    \hat{d}^{uncorr}(r,\infty) & = & \sum_{l=1,b_{l} <
    1}^{t}(N_{t}+N_{r}-2l+1)(1-b_{l}) \\
    \nonumber
    & = & \sum_{l=1}^{t}(N_{t}+N_{r}-2l+1)(1-b_{l})^{+}\\
    & = & \sum_{l=1}^{t}(N_{t}+N_{r}-2l+1)\alpha_{l},
                  \label{bl-asymp}
\end{IEEEeqnarray}
where $\alpha_{l}=(1-b_{l})^{+}$ and $(x)^{+}=\max{(x,0)}$ for
each real number $x$. Next, we show that the $\alpha_{l}$'s that
satisfy (\ref{bl-asymp}) are exactly the coefficients leading to
the asymptotic DMT given in \cite{ Zheng03}. First, recall that
$b=(b_{1},...,b_{t}) \in \mathcal{A}=\{(b_{1},...,b_{t}) \in
\mathcal{R}^{t+} \mid \sum_{l=1}^{t}b_{l}=r\}$, and $b$ maximizes
the lower bound (\ref{eq-low-bound}).
\begin{equation}\label{eq-max}
    \max_{b \in \mathcal{A}}\prod_{l=1}^{t}\Gamma_{inc}(\xi_l,N_{r}+N_{t}-2l+1).
\end{equation}\
As $\eta \rightarrow \infty$,
$\Gamma_{inc}(\xi_l,N_{r}+N_{t}-2l+1)$ is independent of $b_{l}$,
for $b_{l} \geq 1$. This is because when $b_{l}=1$ then
$\xi_{l}=gN_{t}$, and when $b_{l} > 1$ then $\xi_{l} \rightarrow
\infty$ and $\Gamma_{inc}(\xi_l,N_{r}+N_{t}-2l+1) \rightarrow 1$.
Indeed, if we let $\kappa$ be the number of coefficients $b_{l} <
1$, the maximization problem (\ref{eq-max}) reduces to:
\begin{equation}\label{eq-max2}
    K \cdot \max_{b \in
    \mathcal{B}}\prod_{l=1}^{\kappa}\Gamma_{inc}(\xi_l,N_{r}+N_{t}-2l+1),
\end{equation}
where $K$ is a constant factor and $\mathcal{B}$ is given by:
\begin{equation*}\label{}
    \mathcal{B}=\{(b_{1},...,b_{\kappa}) \in \mathcal{R}^{\kappa+}
\mid \sum_{l=1}^{\kappa}b_{l} \leq r \}.
\end{equation*}
Maximization (\ref{eq-max2}) involves only $b_{l} < 1$, for which
$\xi_{l} \rightarrow 0$ as $\eta \rightarrow \infty$. Using
(\ref{xil-asymp}) and the fact that around zero,
$\Gamma_{inc}(x,m)$ can be approximated by $\Gamma_{inc}(x,m)
\approx \frac{x^{m}}{m!}$, we have:
$\Gamma_{inc}(\xi_l,N_{r}+N_{t}-2l+1) \approx
    C(g \eta)^{(b_{l}-1)(N_{r}+N_{t}-2l+1)}$, where $C=\frac{g N_{t}}{(N_{t}+N_{r}-2l+1)!}$ is a constant
indépendant of $b_{l}$, $l=1,\ldots,t$. Then, (\ref{eq-max2}) is
equivalent to:
\begin{equation*}\label{}
    \min_{b \in
    \mathcal{B}}\sum_{l=1}^{\kappa}(1-b_{l})(N_{r}+N_{t}-2l+1).
\end{equation*}
Thus, the asymptotic diversity estimate given by (\ref{bl-asymp})
can be expressed as:
\begin{equation*}\label{}
     \hat{d}^{uncorr}(r,\infty)= \min_{\alpha
\in \mathcal{A'}}\sum_{l=1}^{t} (N_{r}+N_{t}-2l+1) \alpha_{l},
\end{equation*}
where $\alpha=(\alpha_{1},...,\alpha_{t})$, and $\mathcal{A'}$ is
given by:
\begin{equation*}\label{}
    \mathcal{A'}=\{(\alpha_{1},...,\alpha_{t}) \in \mathcal{R}^{t+} \mid
\sum_{l=1}^{t}(1-\alpha_{l})^{+} \leq r \},
\end{equation*}
which is exactly the asymptotic DMT $d_{asym}$ established in
\cite{ Zheng03}. This completes the proof of Theorem
\ref{th-equality-asymp}.


\end{document}